\providecommand{\U}[1]{\protect\rule{.1in}{.1in}}
\providecommand{\U}[1]{\protect\rule{.1in}{.1in}}
\theoremstyle{plain}
\newtheorem{lemma}{Lemma}
\newtheorem{remark}{Remark}
\numberwithin{equation}{section}
\begin{document}
\title[The Riccati System]{The Riccati System and a Diffusion-Type Equation}
\author{Erwin Suazo}
\address{Department of Mathematical Sciences, University of Puerto Rico,
Mayaguez, Call Box 9000, Puerto Rico 00681--9000.}
\email{erwin.suazo@upr.edu}
\author{Sergei K. Suslov}
\address{School of Mathematical and Statistical Sciences \& Mathematical,
Computational and Modeling Sciences Center, Arizona State University, Tempe,
AZ 85287--1804, U.S.A.}
\email{sks@asu.edu}
\urladdr{http://hahn.la.asu.edu/\symbol{126}suslov/index.html}
\author{Jos\'{e} M. Vega-Guzm\'{a}n}
\address{Mathematical, Computational and Modeling Sciences Center, Arizona
State University, Tempe, AZ 85287--1904, U.S.A.}
\email{jmvega@asu.edu}
\date{\today }
\subjclass{Primary 35C05, 35K15, 42A38. Secondary 35A08, 80A99.}
\keywords{Diffusion-type equations, Green's function, fundamental solution,
autonomous and nonautonomous Burgers equations, Fokker--Planck equation,
Riccati equation and Riccati-type system, Ermakov-type system and
Pinney-type solution.}

\begin{abstract}
We discuss a method of constructing solution of the initial value problem
for diffusion-type equations in terms of solutions of certain Riccati and
Ermakov-type systems. A nonautonomous Burgers-type equation is also
considered.
\end{abstract}

\maketitle

\section{Introduction}

A goal of this note, complementary to our recent paper \cite{SuazoSusVega10}%
, is to elaborate on the Cauchy initial value problem for a class of
nonautonomous and inhomogeneous diffusion-type equations on $%
\mathbb{R}
.$ A corresponding nonautonomous Burgers-type equation is also analyzed as a
by-product. Here, we use explicit transformations to the standard forms and
emphasize natural relations with certain Riccati and Ermakov-type systems,
which seem are missing in the available literature. Similar methods are
applied to the corresponding Schr\"{o}dinger equation (see, for example, 
\cite{Cor-Sot:Lop:Sua:Sus}, \cite{Cor-Sot:Sua:Sus}, \cite{Cor-Sot:Sua:SusInv}%
, \cite{Cor-Sot:Sus}, \cite{Cor-Sot:SusDPO}, \cite{Lan:Lop:Sus}, \cite%
{Lan:Sus}, \cite{Lop:Sus}, \cite{Me:Co:Su}, \cite{Suaz:Sus}, \cite{Suslov10}%
, \cite{Suslov11} and references therein). A group theoretical approach to a
similar class of partial differential equations is discussed in Refs.~\cite%
{GagWint93}, \cite{Miller77} and \cite{Rosen76}.

For an introduction to fundamental solutions for parabolic equations, see
chapter one of the book by Friedman \cite{Friedman64}. Among numerous
applications, we only elaborate here on an important role of fundamental
solutions in probability theory \cite{Craddock09}, \cite{KaratzShreve}.
Consider an It\^{o} diffusion $X=\left\{ X_{t}:t\geq 0\right\} $ which
satisfies the stochastic differential equation%
\begin{equation}
dX_{t}=b\left( X_{t},t\right) \ dt+\sigma \left( X_{t},t\right) \
dW_{t},\qquad X_{0}=x,  \label{SDE}
\end{equation}%
in which $W=\left\{ W_{t}:t\geq 0\right\} $ is a standard Wiener process.
The existence and uniqueness of solutions of (\ref{SDE}) depends on the
coefficients $b$ and $\sigma .$ (See Ref.~\cite{KaratzShreve} for conditions
of unique strong solution to (\ref{SDE}).) If the equation (\ref{SDE}) has a
unique solution, then the expectations%
\begin{equation}
u\left( x,t\right) =E_{x}\left[ \phi \left( X_{t}\right) \right] =E\left[
\phi \left( X_{t}\right) |X_{0}=x\right]  \label{Expect}
\end{equation}%
are solutions of the Cauchy problem%
\begin{equation}
u_{t}=\frac{1}{2}\sigma ^{2}\left( x,t\right) u_{xx}+b\left( x,t\right)
u_{x},\qquad u\left( x,0\right) =\phi \left( x\right) .  \label{SDECauchy}
\end{equation}%
This PDE is known as Kolmogorov forward equation \cite{Craddock09}, \cite%
{KaratzShreve}. Thus if $p\left( x,y,t\right) $ is the appropriate
fundamental solution of (\ref{SDECauchy}), then one can compute the given
expectations according to%
\begin{equation}
E_{x}\left[ \phi \left( X_{t}\right) \right] =\int_{\Omega }p\left(
x,y,t\right) \phi \left( y\right) \ dy.  \label{SDEExpect}
\end{equation}%
In this context, the fundamental solution is known as the probability
transition density for the process and%
\begin{equation}
\int_{\Omega }p\left( x,y,t\right) \ dy=1.  \label{norm}
\end{equation}%
See also Refs.~\cite{Albev:Roz10} and \cite{Kamb10} for applications to
stochastic differential equations related to Fokker--Planck and Burgers
equations.

\section{Transformation to the Standard Form}

We present the following result.

\begin{lemma}
The nonautonomous and inhomogeneous diffusion-type equation%
\begin{equation}
\frac{\partial u}{\partial t}=a\left( t\right) \frac{\partial ^{2}u}{%
\partial x^{2}}-\left( g\left( t\right) -c\left( t\right) x\right) \frac{%
\partial u}{\partial x}+\left( d\left( t\right) +f\left( t\right) x-b\left(
t\right) x^{2}\right) u,  \label{heat}
\end{equation}%
where $a,b,c,d,f,g$ are suitable functions of time $t$\ only, can be reduced
to the standard autonomous form%
\begin{equation}
\frac{\partial v}{\partial \tau }=\frac{\partial ^{2}v}{\partial \xi ^{2}}
\label{standard}
\end{equation}%
with the help of the following substitution:%
\begin{align}
u\left( x,t\right) & =\frac{1}{\sqrt{\mu \left( t\right) }}e^{\alpha \left(
t\right) x^{2}+\delta \left( t\right) x+\kappa \left( t\right) }v\left( \xi
,\tau \right) ,  \label{substitution} \\
\xi & =\beta \left( t\right) x+\varepsilon \left( t\right) ,\quad \tau
=\gamma \left( t\right) .  \notag
\end{align}%
Here, $\mu ,\alpha ,\beta ,\gamma ,\delta ,\varepsilon ,\kappa $ are
functions of $t$\ that satisfy%
\begin{equation}
\frac{\mu ^{\prime }}{2\mu }+2a\alpha +d=0  \label{mu}
\end{equation}%
and%
\begin{align}
& \frac{d\alpha }{dt}+b-2c\alpha -4a\alpha ^{2}=0,  \label{alpha} \\
& \frac{d\beta }{dt}-\left( c+4a\alpha \right) \beta =0,  \label{beta} \\
& \frac{d\gamma }{dt}-a\beta ^{2}=0,  \label{gamma} \\
& \frac{d\delta }{dt}-\left( c+4a\alpha \right) \delta =f-2\alpha g,
\label{delta} \\
& \frac{d\varepsilon }{dt}+\left( g-2a\delta \right) \beta =0,
\label{epsilon} \\
& \frac{d\kappa }{dt}+g\delta -a\delta ^{2}=0.  \label{kappa}
\end{align}
\end{lemma}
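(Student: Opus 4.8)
The plan is to substitute the ansatz \eqref{substitution} directly into \eqref{heat} and to impose that the resulting equation for $v$ collapses to \eqref{standard}; the ODEs \eqref{mu}--\eqref{kappa} will emerge as precisely the conditions that annihilate every spurious term. Writing the prefactor as $w(x,t)=\mu^{-1/2}e^{\alpha x^{2}+\delta x+\kappa}$, so that $u=w\,v(\xi,\tau)$ with $\xi=\beta x+\varepsilon$ and $\tau=\gamma$, I would first record the three derivatives entering \eqref{heat}. Using $w_{x}=(2\alpha x+\delta)\,w$ one finds
\[
u_{x}=w\big[(2\alpha x+\delta)v+\beta v_{\xi}\big],\qquad
u_{xx}=w\big[\big((2\alpha x+\delta)^{2}+2\alpha\big)v+2\beta(2\alpha x+\delta)v_{\xi}+\beta^{2}v_{\xi\xi}\big],
\]
while $w_{t}=\big(-\tfrac{\mu'}{2\mu}+\alpha'x^{2}+\delta'x+\kappa'\big)w$ gives
\[
u_{t}=w\big[\big(-\tfrac{\mu'}{2\mu}+\alpha'x^{2}+\delta'x+\kappa'\big)v+(\beta'x+\varepsilon')v_{\xi}+\gamma'v_{\tau}\big].
\]

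Next I would insert these expressions into \eqref{heat}, cancel the common nonvanishing factor $w$, and organize the identity as a comparison of the coefficients of $v_{\tau}$, $v_{\xi\xi}$, $v_{\xi}$ and $v$, the latter three being polynomials in $x$. Matching the $v_{\tau}$ and $v_{\xi\xi}$ terms forces $\gamma'=a\beta^{2}$, which is \eqref{gamma} and normalizes the diffusion coefficient to unity. Requiring the total coefficient of $v_{\xi}$ to vanish produces, from its $x^{1}$ and $x^{0}$ parts respectively, $\beta'=(c+4a\alpha)\beta$ and $\varepsilon'=(2a\delta-g)\beta$, that is \eqref{beta} and \eqref{epsilon}. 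Requiring the coefficient of $v$ to vanish yields two further conditions: the $x^{2}$ part gives the Riccati equation $\alpha'+b-2c\alpha-4a\alpha^{2}=0$, namely \eqref{alpha}, and the $x^{1}$ part gives $\delta'-(c+4a\alpha)\delta=f-2\alpha g$, namely \eqref{delta}.

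The only point requiring care is the constant ($x^{0}$) part of the coefficient of $v$, which is the single relation
\[
\kappa'-\frac{\mu'}{2\mu}=d+2a\alpha-g\delta+a\delta^{2}.
\]
This is one equation for the two unknowns $\mu$ and $\kappa$, so there is a one-parameter freedom in how the time-dependent normalization is distributed between the Gaussian weight $\mu^{-1/2}$ and the additive factor $e^{\kappa}$. The lemma fixes this freedom by the natural split \eqref{mu} and \eqref{kappa}: assigning $\tfrac{\mu'}{2\mu}=-(2a\alpha+d)$ to tie $\mu$ to the already-determined $\alpha$, and leaving $\kappa'=a\delta^{2}-g\delta$ for the remainder. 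Adding these two recovers the displayed constant relation, so the system \eqref{mu}--\eqref{kappa} is exactly what makes every spurious term vanish and reduces \eqref{heat} to \eqref{standard}. I expect the bookkeeping of the three polynomial coefficients in $x$ to be the most error-prone step; there is no genuine analytic obstacle, since once $\alpha$ solves its Riccati equation the remaining unknowns $\beta,\gamma,\delta,\varepsilon,\mu,\kappa$ follow from linear equations and quadratures.
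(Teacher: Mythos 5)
Your proof is correct and follows exactly the route the paper intends (it states the lemma as a direct-substitution verification, leaving details to the reader): insert the ansatz, cancel the prefactor, and match the coefficients of $v_{\tau}$, $v_{\xi\xi}$, $v_{\xi}$ and $v$ as polynomials in $x$, which yields precisely the system \eqref{mu}--\eqref{kappa}. Your observation about the one-parameter freedom in splitting the $x^{0}$ relation between $\mu$ and $\kappa$, with \eqref{mu} and \eqref{kappa} being the paper's chosen normalization, is accurate and a worthwhile clarification.
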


Equation (\ref{alpha}) is called the \textit{Riccati nonlinear differential
equation} \cite{Reid72}, \cite{Wa}, \cite{Wh:Wa} and we shall refer to the
system (\ref{alpha})--(\ref{kappa}) as a \textit{Riccati-type system}.

The substitution (\ref{mu}) reduces the nonlinear Riccati equation (\ref%
{alpha}) to the second order linear equation%
\begin{equation}
\mu ^{\prime \prime }-\tau \left( t\right) \mu ^{\prime }-4\sigma \left(
t\right) \mu =0,  \label{charequation}
\end{equation}%
where%
\begin{equation}
\tau \left( t\right) =\frac{a^{\prime }}{a}+2c-4d,\quad \sigma \left(
t\right) =ab+cd-d^{2}+\frac{d}{2}\left( \frac{a^{\prime }}{a}-\frac{%
d^{\prime }}{d}\right) ,  \label{SigTau}
\end{equation}%
which shall be referred to as a \textit{characteristic equation} \cite%
{SuazoSusVega10}.

It is also known \cite{SuazoSusVega10} that the diffusion-type equation (\ref%
{heat}) has a particular solution of the form%
\begin{equation}
u=\frac{1}{\sqrt{\mu \left( t\right) }}e^{\alpha \left( t\right) x^{2}+\beta
\left( t\right) xy+\gamma \left( t\right) y^{2}+\delta \left( t\right)
x+\varepsilon \left( t\right) y+\kappa \left( t\right) },
\label{partsolution}
\end{equation}%
provided that the time dependent functions $\mu ,\alpha ,\beta ,\gamma
,\delta ,\varepsilon ,\kappa $ satisfy the Riccati-type system (\ref{mu})--(%
\ref{kappa}).

A group theoretical approach to a similar class of partial differential
equations is discussed in Refs.~\cite{GagWint93}, \cite{Miller77} and \cite%
{Rosen76}.

\section{Fundamental Solution}

By the \textit{superposition principle} one can solve (formally) the Cauchy
initial value problem for the diffusion-type equation (\ref{heat}) subject
to initial data $u\left( x,0\right) =\varphi \left( x\right) $ on the entire
real line $-\infty <x<\infty $ in an integral form%
\begin{equation}
u\left( x,t\right) =\int_{-\infty }^{\infty }K_{0}\left( x,y,t\right) \
\varphi \left( x\right) dy  \label{CIVP}
\end{equation}%
with the \textit{fundamental solution} (heat kernel) \cite{SuazoSusVega10}:%
\begin{equation}
K_{0}\left( x,y,t\right) =\frac{1}{\sqrt{2\pi \mu _{0}\left( t\right) }}\
e^{\alpha _{0}\left( t\right) x^{2}+\beta _{0}\left( t\right) xy+\gamma
_{0}\left( t\right) y^{2}+\delta _{0}\left( t\right) x+\varepsilon
_{0}\left( t\right) y+\kappa _{0}\left( t\right) },  \label{heatkernel}
\end{equation}%
where a particular solution of the Riccati-type system (\ref{alpha})--(\ref%
{kappa}) is given by 
\begin{equation}
\alpha _{0}\left( t\right) =-\frac{1}{4a\left( t\right) }\frac{\mu
_{0}^{\prime }\left( t\right) }{\mu _{0}\left( t\right) }-\frac{d\left(
t\right) }{2a\left( t\right) },  \label{A0}
\end{equation}%
\begin{equation}
\beta _{0}\left( t\right) =\frac{h\left( t\right) }{\mu _{0}\left( t\right) }%
,\quad h\left( t\right) =\exp \left( \int_{0}^{t}\left( c\left( s\right)
-2d\left( s\right) \right) \ ds\right) ,  \label{B0}
\end{equation}%
\begin{align}
\gamma _{0}\left( t\right) & =\frac{d\left( 0\right) }{2a\left( 0\right) }-%
\frac{a\left( t\right) h^{2}\left( t\right) }{\mu _{0}\left( t\right) \mu
_{0}^{\prime }\left( t\right) }-4\int_{0}^{t}\frac{a\left( s\right) \sigma
\left( s\right) h\left( s\right) }{\left( \mu _{0}^{\prime }\left( s\right)
\right) ^{2}}\ ds  \label{C0} \\
& =\frac{d\left( 0\right) }{2a\left( 0\right) }-\frac{1}{2\mu _{1}\left(
0\right) }\frac{\mu _{1}\left( t\right) }{\mu _{0}\left( t\right) },
\label{C1}
\end{align}%
\begin{equation}
\delta _{0}\left( t\right) =\frac{h\left( t\right) }{\mu _{0}\left( t\right) 
}\ \ \int_{0}^{t}\left[ \left( f\left( s\right) +\frac{d\left( s\right) }{%
a\left( s\right) }g\left( s\right) \right) \mu _{0}\left( s\right) +\frac{%
g\left( s\right) }{2a\left( s\right) }\mu _{0}^{\prime }\left( s\right) %
\right] \ \frac{ds}{h\left( s\right) },  \label{D0}
\end{equation}%
\begin{align}
\varepsilon _{0}\left( t\right) & =-\frac{2a\left( t\right) h\left( t\right) 
}{\mu _{0}^{\prime }\left( t\right) }\delta _{0}\left( t\right)
-8\int_{0}^{t}\frac{a\left( s\right) \sigma \left( s\right) h\left( s\right) 
}{\left( \mu _{0}^{\prime }\left( s\right) \right) ^{2}}\left( \mu
_{0}\left( s\right) \delta _{0}\left( s\right) \right) \ ds  \label{E0} \\
& \quad +2\int_{0}^{t}\frac{a\left( s\right) h\left( s\right) }{\mu
_{0}^{\prime }\left( s\right) }\left[ f\left( s\right) +\frac{d\left(
s\right) }{a\left( s\right) }g\left( s\right) \right] \ ds,  \notag
\end{align}%
\begin{align}
\kappa _{0}\left( t\right) & =-\frac{a\left( t\right) \mu _{0}\left(
t\right) }{\mu _{0}^{\prime }\left( t\right) }\delta _{0}^{2}\left( t\right)
-4\int_{0}^{t}\frac{a\left( s\right) \sigma \left( s\right) }{\left( \mu
_{0}^{\prime }\left( s\right) \right) ^{2}}\left( \mu _{0}\left( s\right)
\delta _{0}\left( s\right) \right) ^{2}\ ds  \label{K0} \\
& \quad +2\int_{0}^{t}\frac{a\left( s\right) }{\mu _{0}^{\prime }\left(
s\right) }\left( \mu _{0}\left( s\right) \delta _{0}\left( s\right) \right) %
\left[ f\left( s\right) +\frac{d\left( s\right) }{a\left( s\right) }g\left(
s\right) \right] \ ds  \notag
\end{align}%
with $\delta \left( 0\right) =g\left( 0\right) /\left( 2a\left( 0\right)
\right) ,$ $\varepsilon \left( 0\right) =-\delta \left( 0\right) ,$ $\kappa
\left( 0\right) =0.$ Here, $\mu _{0}$ and $\mu _{1}$ are the so-called 
\textit{standard solutions} of the characteristic equation (\ref%
{charequation}) subject to the\ following initial data%
\begin{equation}
\mu _{0}\left( 0\right) =0,\quad \mu _{0}^{\prime }\left( 0\right) =2a\left(
0\right) \neq 0\qquad \mu _{1}\left( 0\right) \neq 0,\quad \mu _{1}^{\prime
}\left( 0\right) =0.  \label{standarddata}
\end{equation}%
Solution (\ref{A0})--(\ref{K0}) shall be referred to as a \textit{%
fundamental solution} of the Riccati-type system (\ref{alpha})--(\ref{kappa}%
); see (\ref{AsA0})--(\ref{AsF0}) and (\ref{AsK0}) for the corresponding
asymptotics.

\begin{lemma}
The Riccati-type system (\ref{mu})--(\ref{kappa}) has the following
(general) solution:%
\begin{align}
& \mu \left( t\right) =-2\mu \left( 0\right) \mu _{0}\left( t\right) \left(
\alpha \left( 0\right) +\gamma _{0}\left( t\right) \right) ,  \label{MKernel}
\\
& \alpha \left( t\right) =\alpha _{0}\left( t\right) -\frac{\beta
_{0}^{2}\left( t\right) }{4\left( \alpha \left( 0\right) +\gamma _{0}\left(
t\right) \right) },  \label{AKernel} \\
& \beta \left( t\right) =-\frac{\beta \left( 0\right) \beta _{0}\left(
t\right) }{2\left( \alpha \left( 0\right) +\gamma _{0}\left( t\right)
\right) },  \label{BKernel} \\
& \gamma \left( t\right) =\gamma \left( 0\right) -\frac{\beta ^{2}\left(
0\right) }{4\left( \alpha \left( 0\right) +\gamma _{0}\left( t\right)
\right) }  \label{CKernel}
\end{align}%
and%
\begin{align}
\delta \left( t\right) & =\delta _{0}\left( t\right) -\frac{\beta _{0}\left(
t\right) \left( \delta \left( 0\right) +\varepsilon _{0}\left( t\right)
\right) }{2\left( \alpha \left( 0\right) +\gamma _{0}\left( t\right) \right) 
},  \label{DKernel} \\
\varepsilon \left( t\right) & =\varepsilon \left( 0\right) -\frac{\beta
\left( 0\right) \left( \delta \left( 0\right) +\varepsilon _{0}\left(
t\right) \right) }{2\left( \alpha \left( 0\right) +\gamma _{0}\left(
t\right) \right) },  \label{EKernel} \\
\kappa \left( t\right) & =\kappa \left( 0\right) +\kappa _{0}\left( t\right)
-\frac{\left( \delta \left( 0\right) +\varepsilon _{0}\left( t\right)
\right) ^{2}}{4\left( \alpha \left( 0\right) +\gamma _{0}\left( t\right)
\right) }  \label{FKernel}
\end{align}%
in terms of the fundamental solution (\ref{A0})--(\ref{K0}) subject to
arbitrary initial data $\mu \left( 0\right) ,$ $\alpha \left( 0\right) ,$ $%
\beta \left( 0\right) ,$ $\gamma \left( 0\right) ,$ $\delta \left( 0\right)
, $ $\varepsilon \left( 0\right) ,$ $\kappa \left( 0\right) .$
\end{lemma}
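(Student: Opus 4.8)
\emph{Proof strategy (plan).} The plan is to exploit the triangular, or cascade, structure of the system (\ref{mu})--(\ref{kappa}): the only genuinely nonlinear equation is the Riccati equation (\ref{alpha}) for $\alpha$, while each of (\ref{mu}), (\ref{beta}), (\ref{gamma}), (\ref{delta}), (\ref{epsilon}), (\ref{kappa}) is linear in its own unknown once the previously found functions are treated as known coefficients. Since the fundamental solution (\ref{A0})--(\ref{K0}) already supplies one particular solution $\alpha_{0},\beta_{0},\gamma_{0},\delta_{0},\varepsilon_{0},\kappa_{0}$, I would first \emph{construct} the general solution by integrating the equations in the order $\alpha,\mu,\beta,\gamma,\delta,\varepsilon,\kappa$, carrying the seven prescribed values $\alpha(0),\dots,\kappa(0)$ as constants of integration, and then \emph{verify} the resulting formulas (\ref{MKernel})--(\ref{FKernel}) by direct substitution, which sidesteps all convergence subtleties in the main computation.

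For the Riccati equation (\ref{alpha}) I would use nonlinear superposition around $\alpha_{0}$: the substitution $\alpha=\alpha_{0}+w$ converts (\ref{alpha}) into a Bernoulli equation, and then $z=1/w$ linearizes it to $z'+(2c+8a\alpha_{0})z=-4a$. Its integrating factor is proportional to $\beta_{0}^{2}$, because (\ref{beta}) gives $\beta_{0}'/\beta_{0}=c+4a\alpha_{0}$, while (\ref{gamma}) gives $\gamma_{0}'=a\beta_{0}^{2}$, so the quadrature collapses to $\gamma_{0}$ and yields exactly (\ref{AKernel}) with the combination $\alpha(0)+\gamma_{0}(t)$ in the denominator. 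The remaining quadratic-sector equations are then pure quadratures built on the same two identities: dividing (\ref{mu}) by the analogous relation for $\mu_{0}$ gives $\left(\ln(\mu/\mu_{0})\right)'=-4a(\alpha-\alpha_{0})=\gamma_{0}'/(\alpha(0)+\gamma_{0})$, which integrates to (\ref{MKernel}); the same step applied to (\ref{beta}) gives (\ref{BKernel}); and (\ref{gamma}), rewritten as $\gamma'=\tfrac{1}{4}\beta^{2}(0)\,\tfrac{d}{dt}\!\left(-1/(\alpha(0)+\gamma_{0})\right)$, integrates to (\ref{CKernel}).

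The linear sector is handled the same way. Equation (\ref{delta}) is linear inhomogeneous in $\delta$; solving it with the integrating factor $1/\beta$ and reducing the quadrature by means of the fundamental-solution relation $\varepsilon_{0}'=-(g-2a\delta_{0})\beta_{0}$ produces (\ref{DKernel}), in which $\delta(0)+\varepsilon_{0}(t)$ now plays the role that $\alpha(0)+\gamma_{0}(t)$ played above. Then (\ref{epsilon}) and (\ref{kappa}) are obtained by directly integrating their explicit right-hand sides, giving (\ref{EKernel}) and (\ref{FKernel}). The bilinear shape of (\ref{MKernel})--(\ref{FKernel})---diagonal entries carrying a factor $-\tfrac14$ and off-diagonal entries a factor $-\tfrac12$, all divided by $\alpha(0)+\gamma_{0}$---is exactly what one gets by completing the square in the Gaussian exponent of the two-point solution (\ref{partsolution}); I would note this as the conceptual reason the formulas close up, even if the verification is carried out by the cascade above.

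The main obstacle is not the integration but the matching of initial data, because the fundamental solution is singular at $t=0$: by (\ref{standarddata}) one has $\mu_{0}(0)=0$, so $\alpha_{0}$ and $\beta_{0}$ blow up and $\gamma_{0}(t)\to-\infty$ like $-1/(4a(0)t)$. Consequently each formula (\ref{MKernel})--(\ref{FKernel}) attains the prescribed value $\mu(0),\alpha(0),\dots,\kappa(0)$ only through a delicate limit, which I would establish using the asymptotics (\ref{AsA0})--(\ref{AsF0}) and (\ref{AsK0}). These cancellations are precisely what make the answer clean: $1/(\alpha(0)+\gamma_{0}(t))\to0$ as $t\to0$, so the integration constants drop out of (\ref{CKernel}) and (\ref{FKernel}), while $\mu_{0}(t)(\alpha(0)+\gamma_{0}(t))\to-\tfrac12$ restores $\mu(0)$ in (\ref{MKernel}). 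After checking each initial condition through these limits, direct substitution into (\ref{mu})--(\ref{kappa}) confirms that the seven free constants parametrize the full general solution.
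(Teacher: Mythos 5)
Your proposal is correct, but it follows a genuinely different route from the paper's proof. The paper argues at the PDE level: it takes the Gaussian initial data determined by $\mu(0),\alpha(0),\dots,\kappa(0)$ (the two-point solution (\ref{partsolution}) at $t=0$), evolves it with the superposition integral against the heat kernel (\ref{heatkernel}), evaluates that integral by completing the square via (\ref{Gauss}), and invokes uniqueness of the solution of the Cauchy problem to identify the exponent of the result with $\mu(t),\alpha(t),\dots,\kappa(t)$; the formulas (\ref{MKernel})--(\ref{FKernel}) then drop out in one shot --- indeed, your closing remark that the bilinear shape comes from ``completing the square in the Gaussian exponent'' is essentially the paper's entire proof. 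You instead work purely at the ODE level: linearize the Riccati equation (\ref{alpha}) around $\alpha_0$ (Bernoulli substitution, integrating factor $\beta_0^2$ from (\ref{beta}), quadrature collapsing to $\gamma_0$ by (\ref{gamma})), then integrate the remaining equations in cascade, using the fundamental-solution identities $\gamma_0'=a\beta_0^2$ and $\varepsilon_0'=-(g-2a\delta_0)\beta_0$ to collapse the quadratures for $\mu,\beta,\gamma,\delta,\varepsilon,\kappa$, and finally fix the constants of integration through the singular limits governed by (\ref{AsA0})--(\ref{AsF0}); your key limits, $\mu_0(t)\left(\alpha(0)+\gamma_0(t)\right)\to-\tfrac{1}{2}$ and $1/\left(\alpha(0)+\gamma_0(t)\right)\to 0$ as $t\to 0^{+}$, are exactly right, and I verified that each of your quadratures does close up as claimed. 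The trade-off: the paper's computation is shorter and exhibits the result directly as a nonlinear superposition principle inherited from the kernel, but it tacitly requires uniqueness for the PDE Cauchy problem and convergence of the Gaussian integral (a sign condition on $\alpha(0)+\gamma_0(t)$); your derivation needs no such hypotheses beyond nonvanishing denominators, and it simultaneously delivers the direct-substitution check and the continuity with respect to initial data (\ref{lims}) that the paper only asserts separately after the lemma --- at the cost of the delicate $t\to 0^{+}$ matching, which you correctly identify as the main obstacle.
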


\begin{proof}
Use (\ref{partsolution})--(\ref{heatkernel}), uniqueness of the solution and
the elementary integral:%
\begin{equation}
\int_{-\infty }^{\infty }e^{-ay^{2}+2by}\ dy=\sqrt{\frac{\pi }{a}}\
e^{b^{2}/a},\quad a>0.  \label{Gauss}
\end{equation}%
Computational details are left to the reader.
\end{proof}

\begin{remark}
It is worth noting that our transformation (\ref{substitution}), combined
with the standard heat kernel \cite{NiPDE}:%
\begin{equation}
K_{0}\left( \xi ,\eta ,\tau \right) =\frac{1}{\sqrt{4\pi \left( \tau -\tau
_{0}\right) }}\exp \left[ -\frac{\left( \xi -\eta \right) ^{2}}{4\left( \tau
-\tau _{0}\right) }\right]  \label{heatstandard}
\end{equation}%
for the diffusion equation (\ref{standard}) and (\ref{MKernel})--(\ref%
{FKernel}), allows one to derive the fundamental solution (\ref{heatkernel})
of the diffusion-type equation (\ref{heat}) from a new perspective.
\end{remark}

\begin{lemma}
Solution (\ref{MKernel})--(\ref{FKernel}) implies:%
\begin{align}
& \mu _{0}=\frac{2\mu }{\mu \left( 0\right) \beta ^{2}\left( 0\right) }%
\left( \gamma -\gamma \left( 0\right) \right) ,  \label{M0} \\
& \alpha _{0}=\alpha _{0}\left( t\right) -\frac{\beta ^{2}}{4\left( \gamma
-\gamma \left( 0\right) \right) },  \label{AA0} \\
& \beta _{0}=\frac{\beta \left( 0\right) \beta }{2\left( \gamma -\gamma
\left( 0\right) \right) },  \label{BB0} \\
& \gamma _{0}=-\alpha \left( 0\right) -\frac{\beta ^{2}\left( 0\right) }{%
4\left( \gamma -\gamma \left( 0\right) \right) }  \label{CC0}
\end{align}%
and%
\begin{align}
\delta _{0}& =\delta -\frac{\beta \left( \varepsilon -\varepsilon \left(
0\right) \right) }{2\left( \gamma -\gamma \left( 0\right) \right) },
\label{DD0} \\
\varepsilon _{0}& =-\delta \left( 0\right) +\frac{\beta \left( 0\right)
\left( \varepsilon -\varepsilon \left( 0\right) \right) }{2\left( \gamma
-\gamma \left( 0\right) \right) },  \label{EE0} \\
\kappa _{0}& =\kappa -\kappa \left( 0\right) -\frac{\left( \varepsilon
-\varepsilon \left( 0\right) \right) ^{2}}{4\left( \gamma -\gamma \left(
0\right) \right) },  \label{FF0}
\end{align}%
which gives the following asymptotics%
\begin{align}
& \alpha _{0}\left( t\right) =-\frac{1}{4a\left( 0\right) t}-\frac{c\left(
0\right) }{4a\left( 0\right) }+\frac{a^{\prime }\left( 0\right) }{%
8a^{2}\left( 0\right) }+\mathcal{O}\left( t\right) ,  \label{AsA0} \\
& \beta _{0}\left( t\right) =\frac{1}{2a\left( 0\right) t}-\frac{a^{\prime
}\left( 0\right) }{4a^{2}\left( 0\right) }+\mathcal{O}\left( t\right) ,
\label{AsB0} \\
& \gamma _{0}\left( t\right) =-\frac{1}{4a\left( 0\right) t}+\frac{c\left(
0\right) }{4a\left( 0\right) }+\frac{a^{\prime }\left( 0\right) }{%
8a^{2}\left( 0\right) }+\mathcal{O}\left( t\right) ,  \label{AsC0} \\
& \delta _{0}\left( t\right) =\frac{g\left( 0\right) }{2a\left( 0\right) }+%
\mathcal{O}\left( t\right) ,\qquad \varepsilon _{0}\left( t\right) =-\frac{%
g\left( 0\right) }{2a\left( 0\right) }+\mathcal{O}\left( t\right) ,
\label{AsDE0} \\
& \kappa _{0}\left( t\right) =\mathcal{O}\left( t\right)  \label{AsF0}
\end{align}%
as $t\rightarrow 0^{+}.$
\end{lemma}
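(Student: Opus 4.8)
The plan is to treat the lemma as two logically independent tasks. The first is a purely algebraic inversion: the formulas (\ref{MKernel})--(\ref{FKernel}) express the general-solution quantities $\mu,\alpha,\beta,\gamma,\delta,\varepsilon,\kappa$ in terms of the fundamental solution $\mu_{0},\alpha_{0},\dots,\kappa_{0}$ and the initial data, and (\ref{M0})--(\ref{FF0}) are obtained by solving this same system the other way around. The second task is to extract the $t\to 0^{+}$ asymptotics (\ref{AsA0})--(\ref{AsF0}), which I would derive not from the inversion but directly from the explicit formulas (\ref{A0})--(\ref{K0}) together with the local behaviour of the standard solutions $\mu_{0},\mu_{1}$ forced by the characteristic equation (\ref{charequation}) and the data (\ref{standarddata}).

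For the inversion I would first observe that everything is controlled by two recurring combinations. Equation (\ref{CKernel}), rearranged, gives $\alpha(0)+\gamma_{0}(t)=-\beta^{2}(0)/\bigl[4(\gamma(t)-\gamma(0))\bigr]$, and (\ref{EKernel}) gives $\delta(0)+\varepsilon_{0}(t)=\beta(0)(\varepsilon(t)-\varepsilon(0))/\bigl[2(\gamma(t)-\gamma(0))\bigr]$. I would then substitute the first combination into (\ref{MKernel}) to obtain (\ref{M0}), use the trivial identity $\gamma_{0}=\bigl(\alpha(0)+\gamma_{0}\bigr)-\alpha(0)$ for (\ref{CC0}), read (\ref{BB0}) off (\ref{BKernel}), and feed (\ref{BB0}) back into (\ref{AKernel}) for (\ref{AA0}) (whose leading right-hand term should read $\alpha(t)$). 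The linear block goes the same way: the second combination turns (\ref{EKernel}) into (\ref{EE0}), and substituting it, together with (\ref{BB0}), into (\ref{DKernel}) and (\ref{FKernel}) yields (\ref{DD0}) and (\ref{FF0}). Each step is a one-line back-substitution, so the only thing to watch is the bookkeeping of the powers of $\gamma(t)-\gamma(0)$ and $\beta(0)$.

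For the asymptotics I would Taylor-expand $\mu_{0}$ about $t=0$. From (\ref{charequation}) and (\ref{standarddata}) one has $\mu_{0}(0)=0$, $\mu_{0}'(0)=2a(0)$ and $\mu_{0}''(0)=2a(0)\tau(0)$ with $\tau(0)=a'(0)/a(0)+2c(0)-4d(0)$, so that $\mu_{0}(t)=2a(0)t+a(0)\tau(0)t^{2}+\mathcal{O}(t^{3})$ and $\mu_{0}'(t)=2a(0)+2a(0)\tau(0)t+\mathcal{O}(t^{2})$; similarly $\mu_{1}(t)=\mu_{1}(0)+\mathcal{O}(t^{2})$ since $\mu_{1}'(0)=0$, and the factor $\mu_{1}(0)$ cancels against the prefactor in (\ref{C1}). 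Inserting these into (\ref{B0}), multiplied by $h(t)=1+(c(0)-2d(0))t+\mathcal{O}(t^{2})$, gives (\ref{AsB0}); inserting them into (\ref{A0}) and (\ref{C1}) gives (\ref{AsA0}) and (\ref{AsC0}); and a leading-order evaluation of the integrals in (\ref{D0}), (\ref{E0}), (\ref{K0}) gives (\ref{AsDE0}) and (\ref{AsF0}), consistent with the stated data $\delta(0)=g(0)/(2a(0))$, $\varepsilon(0)=-\delta(0)$, $\kappa(0)=0$.

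The main obstacle is the constant ($\mathcal{O}(1)$) term in the singular expansions of $\alpha_{0}$ and $\gamma_{0}$. Because these involve the ratios $\mu_{0}'/\mu_{0}$ and $\mu_{1}/\mu_{0}$, which blow up like $1/t$, the $\mathcal{O}(1)$ coefficient depends on the \emph{second} Taylor coefficient of $\mu_{0}$ (equivalently on $\tau(0)$, hence on the characteristic equation) and also on the $t$-dependence of the prefactors $1/(4a(t))$ and $d(t)/(2a(t))$, which contribute at constant order precisely because they multiply a $1/t$ singularity. Keeping these cross terms and checking that the $d(0)/(2a(0))$ contributions cancel in $\alpha_{0}$ while the $c(0)/(4a(0))$ terms survive with the signs displayed in (\ref{AsA0}) and (\ref{AsC0}) is the one place where care is needed; everything else is routine series manipulation.
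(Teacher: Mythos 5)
Your proposal is correct, and it is worth noting at the outset that the paper itself offers no written argument (the proof is explicitly ``left to the reader''), so the only thing to compare against is the route implied by the lemma's wording. For the inversion, your approach is exactly the intended one: isolating the combination $\alpha(0)+\gamma_{0}(t)$ from (\ref{CKernel}) and the combination $\delta(0)+\varepsilon_{0}(t)$ from (\ref{EKernel}), then back-substituting into (\ref{MKernel})--(\ref{FKernel}), is how (\ref{M0})--(\ref{FF0}) arise, and you are right that (\ref{AA0}) carries a typo---its leading right-hand term must be $\alpha(t)$, not $\alpha_{0}(t)$, since otherwise the formula is vacuous. Where you genuinely diverge is the asymptotics: the phrase ``which gives the following asymptotics'' suggests deducing (\ref{AsA0})--(\ref{AsF0}) from the inverted formulas (\ref{M0})--(\ref{FF0}) by Taylor-expanding the smooth general solution, e.g.\ $\gamma(t)-\gamma(0)=a(0)\beta^{2}(0)t+\mathcal{O}(t^{2})$ from (\ref{gamma}), in which case one must verify that the arbitrary data $\alpha(0),\beta(0),\ldots$ cancel at both the $1/t$ and constant orders; you instead expand the explicit fundamental solution (\ref{A0})--(\ref{K0}) using $\mu_{0}(t)=2a(0)t+a(0)\tau(0)t^{2}+\mathcal{O}(t^{3})$ and $\mu_{1}(t)=\mu_{1}(0)+\mathcal{O}(t^{2})$, forced by (\ref{charequation}) and (\ref{standarddata}). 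Both routes work. I checked your expansions: the identity $c(0)-2d(0)-\frac{1}{2}\tau(0)=-a^{\prime}(0)/\left(2a(0)\right)$ produces (\ref{AsB0}), the $d(0)/\left(2a(0)\right)$ contributions indeed cancel in (\ref{AsA0}), the factor $\mu_{1}(0)$ cancels in (\ref{C1}) to give (\ref{AsC0}), and the integrals in (\ref{D0})--(\ref{K0}) contribute at the orders you claim. Your route buys independence from the inversion---it verifies the explicit formulas (\ref{A0})--(\ref{K0}) directly and makes the origin of the constant terms (the coefficient $\tau(0)$ of the characteristic equation) transparent---whereas the route suggested by the paper's wording avoids the integral terms entirely but requires the nontrivial cancellation of the arbitrary initial data as a consistency check.
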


(The proof is left to the reader.)

These formulas allows to establish a required asymptotic of the fundamental
solution (\ref{heatkernel}):%
\begin{align}
K_{0}\left( x,y,t\right) & \sim \frac{1}{\sqrt{4\pi a\left( 0\right) t}}\exp %
\left[ -\frac{\left( x-y\right) ^{2}}{4a\left( 0\right) t}\right]
\label{AsK0} \\
& \times \exp \left[ \frac{a^{\prime }\left( 0\right) }{8a^{2}\left(
0\right) }\left( x-y\right) ^{2}-\frac{c\left( 0\right) }{4a\left( 0\right) }%
\left( x^{2}-y^{2}\right) \right] \exp \left[ \frac{g\left( 0\right) }{%
2a\left( 0\right) }\left( x-y\right) \right] .  \notag
\end{align}%
(Here, $f\sim g$ as $t\rightarrow 0^{+},$ if $\lim_{t\rightarrow
0^{+}}\left( f/g\right) =$ $1.$ The proof is left to the reader.)

By a direct substitution one can verify that the right hand sides of (\ref%
{MKernel})--(\ref{FKernel}) satisfy the Riccati-type system (\ref{mu})--(\ref%
{kappa}) and that the asymptotics (\ref{AsA0})--(\ref{AsF0}) result in the
continuity with respect to initial data:%
\begin{equation}
\lim_{t\rightarrow 0^{+}}\mu \left( t\right) =\mu \left( 0\right) ,\quad
\lim_{t\rightarrow 0^{+}}\alpha \left( t\right) =\alpha \left( 0\right)
,\quad \text{etc.\label{lims}}
\end{equation}%
The transformation property (\ref{MKernel})--(\ref{FKernel}) allows one to
find solution of the initial value problem in terms of the fundamental
solution (\ref{A0})--(\ref{K0}) and may be referred to as a \textit{%
nonlinear superposition principle} for the Riccati-type system.

\section{Eigenfunction Expansion and Ermakov-type System}

With the help of transformation (\ref{substitution}) one can reduce the
diffusion equation (\ref{heat}) to another convenient form%
\begin{equation}
\frac{\partial v}{\partial \tau }=\frac{\partial ^{2}v}{\partial \xi ^{2}}%
+\xi ^{2}v,  \label{harmonic}
\end{equation}%
which allows to find solution of the Cauchy initial value problem in terms
of an eigenfunction expansion similar to the case of the corresponding Schr%
\"{o}dinger in Refs.~\cite{Lan:Lop:Sus} and \cite{Suslov10}. This method
requires an extension the Riccati-type system (\ref{alpha})--(\ref{kappa})
to a more general Ermakov-type system \cite{Lan:Lop:Sus}, which is
integrable in quadratures once again in terms of solutions of the
characteristic equation (\ref{charequation}). Further details are left to
the reader.

\section{Nonautonomous Burgers Equation}

The nonlinear equation%
\begin{align}
& \frac{\partial v}{\partial t}+a\left( t\right) \left( v\frac{\partial v}{%
\partial x}-\frac{\partial ^{2}v}{\partial x^{2}}\right) -c\left( t\right)
\left( x\frac{\partial v}{\partial x}+v\right) +g\left( t\right) \frac{%
\partial v}{\partial x}  \label{NABurgers} \\
& \qquad =2\left( 2b\left( t\right) x-f\left( t\right) \right) ,  \notag
\end{align}%
when $a=1$ and $b=c=f=g=0,$ is known as \textit{Burgers' equation} \cite%
{Bateman15}, \cite{Burgers48}, \cite{Cole50}, \cite{Hopf50}, \cite%
{Kadom:Karp71}, \cite{Sach87}, \cite{Whitham}\textit{\ }and we shall refer
to (\ref{NABurgers}) as a nonautonomous \textit{Burgers-type equation}.

\begin{lemma}
The following identity holds%
\begin{align}
& v_{t}+a\left( vv_{x}-v_{xx}\right) +\left( g-cx\right) v_{x}-cv+2\left(
f-2bx\right)  \notag \\
& \qquad =-2\left( \frac{u_{t}-Qu}{u}\right) _{x},  \label{BurgersIdentity}
\end{align}%
if%
\begin{equation}
v=-2\frac{u_{x}}{u}\qquad \left( \text{The \textit{Cole--Hopf transformation}%
}\right)  \label{ColeHopf}
\end{equation}%
and%
\begin{equation}
Qu=au_{xx}-\left( g-cx\right) u_{x}+\left( d+fx-bx^{2}\right) u
\label{HOperator}
\end{equation}%
($a,$ $b,$ $c,$ $d,$ $f,$ $g$ are functions of $t$ only).
\end{lemma}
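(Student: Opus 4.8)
The plan is to verify the identity (\ref{BurgersIdentity}) by a direct computation, exploiting the fact that the transformation (\ref{ColeHopf}) is nothing but a logarithmic derivative, $v=-2(\ln u)_{x}$. First I would introduce $w=\ln u$, so that $v=-2w_{x}$ and the combinations on the right-hand side of (\ref{BurgersIdentity}) become purely algebraic in the derivatives of $w$. Indeed, from $u_{x}=uw_{x}$ one gets $u_{xx}=u\left(w_{x}^{2}+w_{xx}\right)$, whence $u_{t}/u=w_{t}$ and $u_{xx}/u=w_{x}^{2}+w_{xx}$.

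Next I would rewrite the quotient $\left(u_{t}-Qu\right)/u$ using the operator (\ref{HOperator}). Dividing $Qu$ by $u$ and inserting the identities above yields
\[
\frac{u_{t}-Qu}{u}=w_{t}-a\left(w_{x}^{2}+w_{xx}\right)+\left(g-cx\right)w_{x}-\left(d+fx-bx^{2}\right).
\]
The point to keep in mind is that $a,b,c,d,f,g$ depend on $t$ only, so when $-2\partial_{x}$ is applied to this expression the coefficients pass through the derivative and only the explicit powers of $x$ contribute.

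Then I would differentiate term by term and translate back to $v$ by means of $v=-2w_{x}$, $v_{x}=-2w_{xx}$, and $v_{t}=-2w_{xt}$. The $w_{t}$-term reproduces $v_{t}$; the $-a\left(w_{x}^{2}+w_{xx}\right)$-term reproduces $a\left(vv_{x}-v_{xx}\right)$, after noting that $vv_{x}=4w_{x}w_{xx}=2\left(w_{x}^{2}\right)_{x}$ and $v_{xx}=-2w_{xxx}$, so that $vv_{x}-v_{xx}=2\left(w_{x}^{2}+w_{xx}\right)_{x}$; the $\left(g-cx\right)w_{x}$-term reproduces $\left(g-cx\right)v_{x}-cv$, the extra $-cv$ arising from differentiating the factor $-cx$; and the $-\left(d+fx-bx^{2}\right)$-term reproduces $2\left(f-2bx\right)$. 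Collecting these four contributions gives exactly the left-hand side of (\ref{BurgersIdentity}).

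I expect no genuine obstacle beyond careful bookkeeping. The only subtlety worth flagging is the handling of the linear-in-$x$ coefficients: it is precisely the $x$-derivatives of $-cx$ in $\left(g-cx\right)w_{x}$ and of $fx-bx^{2}$ in $d+fx-bx^{2}$ that generate the algebraic terms $-cv$ and $2\left(f-2bx\right)$ on the left side, and these are easy to drop if one is not attentive. Since the substitution $w=\ln u$ makes every step linear and reversible, once it is in place the identity follows immediately.
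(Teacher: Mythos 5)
Your proposal is correct: the paper offers no details at all, stating only that the identity ``can be verified by a direct substitution,'' and your computation is precisely that verification, cleanly organized through $w=\ln u$ so that $u_{t}/u=w_{t}$, $u_{x}/u=w_{x}$, $u_{xx}/u=w_{x}^{2}+w_{xx}$. Your bookkeeping checks out --- in particular the terms $-cv$ and $2\left(f-2bx\right)$ do arise exactly as you say, from differentiating $-cx$ in $\left(g-cx\right)w_{x}$ and $fx-bx^{2}$ in the potential term --- so this is the paper's intended argument, carried out in full.
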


(This can be verified by a direct substitution.)

The substitution (\ref{ColeHopf}) turns the nonlinear Burgers-type equation (%
\ref{NABurgers}) into the diffusion-type equation (\ref{heat}). Then
solution of the corresponding Cauchy initial value problem can be
represented as%
\begin{equation}
v\left( x,t\right) =-2\frac{\partial }{\partial x}\ln \left[ \int_{-\infty
}^{\infty }K_{0}\left( x,y,t\right) \exp \left( -\frac{1}{2}%
\int_{0}^{y}v\left( z,0\right) \ dz\right) \ dy\right] ,  \label{CIVPBurgers}
\end{equation}%
where the heat kernel is given by (\ref{heatkernel}), for suitable initial
data $v\left( z,0\right) $ on $%
\mathbb{R}
.$

\section{Traveling Wave Solutions of Burgers-type Equation}

Looking for solutions of our equation (\ref{NABurgers}) in the form%
\begin{equation}
v=\beta \left( t\right) F\left( \beta \left( t\right) x+\gamma \left(
t\right) \right) =\beta F\left( z\right) ,\quad z=\beta x+\gamma 
\label{BurgersSubst}
\end{equation}%
($\beta $ and $\gamma $ are functions of $t$ only), one gets%
\begin{equation}
F^{\prime \prime }=\left( c_{0}+c_{1}\right) F^{\prime }+FF^{\prime
}+2c_{2}z+c_{3}  \label{BurgersPreRiccati}
\end{equation}%
provided that%
\begin{eqnarray}
&&\beta ^{\prime }=c\beta ,\qquad \quad \gamma ^{\prime }=c_{0}a\beta ^{2},
\label{BSysA} \\
&&g=c_{1}a\beta ,\qquad b=-\frac{1}{2}c_{2}a\beta ^{4},  \label{BSysB} \\
&&f=\frac{1}{2}a\beta ^{3}\left( 2c_{2}\gamma +c_{3}\right)   \label{BSysC}
\end{eqnarray}%
($c_{0},$ $c_{1},$ $c_{2},$ $c_{3}$ are constants). From (\ref%
{BurgersPreRiccati}):%
\begin{equation}
F^{\prime }=\left( c_{0}+c_{1}\right) F+\frac{1}{2}%
F^{2}+c_{2}z^{2}+c_{3}z+c_{4},  \label{BurgersRiccati}
\end{equation}%
where $c_{4}$ is a constant of integration. The substitution%
\begin{equation}
F=-2\frac{\mu ^{\prime }}{\mu }  \label{RiccatiSubstitution}
\end{equation}%
transforms the Riccati equation (\ref{BurgersRiccati}) into a special case
of generalized equation of hypergeometric type:%
\begin{equation}
\mu ^{\prime \prime }-\left( c_{0}+c_{1}\right) \mu ^{\prime }+\frac{1}{2}%
\left( c_{2}z^{2}+c_{3}z+c_{4}\right) \mu =0,  \label{NUEquation}
\end{equation}%
which can be solved in general by methods of Ref.~\cite{Ni:Uv}. Elementary
solutions are discussed, for example, in \cite{KudryashovBook10} and \cite%
{Kudryash:Sine09}.

\section{Some Examples}

Now we consider from a united viewpoint several elementary diffusion and
Burgers equations that are important in applications.

\bigskip

\textbf{Example~1\quad }For the standard \textit{diffusion equation} on $%
\mathbb{R}
:$%
\begin{equation}
\frac{\partial u}{\partial t}=a\frac{\partial ^{2}u}{\partial x^{2}},\qquad
a=\text{constant}>0  \label{sp1}
\end{equation}%
the heat kernel is given by%
\begin{equation}
K\left( x,y,t\right) =\frac{1}{\sqrt{4\pi at}}\exp \left[ -\frac{\left(
x-y\right) ^{2}}{4at}\right] ,\qquad t>0.  \label{sp2}
\end{equation}%
(See \cite{Cann}, \cite{NiPDE} and references therein for a detailed
investigation of the classical one-dimensional heat equation.)

\bigskip

\textbf{Example 2\quad }In mathematical description of the nerve cell a
dendritic branch is typically modeled by using cylindrical \textit{cable
equation} \cite{JackNobleTsien83}:%
\begin{equation}
\tau \frac{\partial u}{\partial t}=\lambda ^{2}\frac{\partial ^{2}u}{%
\partial x^{2}}+u,\quad \tau =\text{constant}>0.  \label{sp3}
\end{equation}%
The fundamental solution on $%
\mathbb{R}
$ is given by%
\begin{equation}
K_{0}\left( x,y,t\right) =\frac{\sqrt{\tau }e^{t/\tau }}{\sqrt{4\pi \lambda
^{2}t}}\exp \left[ -\frac{\tau \left( x-y\right) ^{2}}{4\lambda ^{2}t}\right]
,\quad t>0.  \label{sp4}
\end{equation}%
(See also \cite{Herr-Val:Sus} and references therein.)

\bigskip

\textbf{Example 3\quad }The fundamental solution of the \textit{%
Fokker-Planck equation }\cite{Risken89}, \cite{Yau04}: 
\begin{equation}
\frac{\partial u}{\partial t}=\frac{\partial ^{2}u}{\partial x^{2}}+x\frac{%
\partial u}{\partial x}+u  \label{sp5}
\end{equation}%
on $%
\mathbb{R}
$ is given by \cite{SuazoSusVega10}:%
\begin{equation}
K_{0}\left( x,y,t\right) =\frac{1}{\sqrt{2\pi \left( 1-e^{-2t}\right) }}\exp %
\left[ -\frac{\left( x-e^{-t}y\right) ^{2}}{2\left( 1-e^{-2t}\right) }\right]
,\quad t>0.  \label{sp6}
\end{equation}%
Here, 
\begin{equation}
\lim_{t\rightarrow \infty }K_{0}\left( x,y,t\right) =\frac{e^{-x^{2}/2}}{%
\sqrt{2\pi }},\qquad y=\text{constant}.  \label{FPLimit}
\end{equation}

\bigskip

\textbf{Example 4\quad }Equation%
\begin{equation}
\frac{\partial u}{\partial t}=a\frac{\partial ^{2}u}{\partial x^{2}}+\left(
g-kx\right) \frac{\partial u}{\partial x},\qquad a,k>0,\quad g\geq 0
\label{sp7}
\end{equation}%
corresponds to the heat equation with linear drift when $g=0$ \cite{Miller77}%
. In stochastic differential equations this equation corresponds the
Kolmogorov forward equation for the regular Ornstein--Uhlenbech process \cite%
{Craddock09}. The fundamental solution is given by%
\begin{align}
& K_{0}\left( x,y,t\right) =\frac{\sqrt{k}e^{kt/2}}{\sqrt{4\pi a\sinh \left(
kt\right) }}  \label{sp8} \\
& \qquad \times \exp \left[ -\frac{\left( k\left(
xe^{-kt/2}-ye^{kt/2}\right) +2g\sinh \left( kt/2\right) \right) ^{2}}{%
4ak\sinh \left( kt\right) }\right] ,\quad t>0.  \notag
\end{align}%
(See Refs.~\cite{Craddock09} and \cite{SuazoSusVega10} for more details.)

\bigskip

\textbf{Example 5\quad }The \textit{viscous Burgers equation }\cite%
{Bateman15}, \cite{Burgers48}, \cite{Kadom:Karp71}, \cite{Kudryash:Sine09}, 
\cite{Whitham}:%
\begin{equation}
\frac{\partial v}{\partial t}+v\frac{\partial v}{\partial x}=a\frac{\partial
^{2}v}{\partial x^{2}},\qquad a=\text{constant}>0  \label{vBurgers}
\end{equation}%
can be linearized by the \textit{Cole--Hopf substitution} \cite{Cole50}, 
\cite{Hopf50}:%
\begin{equation}
v=-\frac{2a}{u}\frac{\partial u}{\partial x},  \label{vColeHopf}
\end{equation}%
which turns it into the diffusion equation (\ref{sp1}). Solution of the
initial value problem has the form:%
\begin{equation}
v\left( x,t\right) =-\frac{a}{\sqrt{\pi at}}\frac{\partial }{\partial x}\ln %
\left[ \int_{-\infty }^{\infty }\exp \left( -\frac{\left( x-y\right) ^{2}}{%
4at}-\frac{1}{2a}\int_{0}^{y}v\left( z,0\right) dz\right) dy\right]
\label{vBurgersSolution}
\end{equation}%
for $t>0$ and suitable initial data on $%
\mathbb{R}
.\medskip $

\bigskip

\textbf{Example 6\quad }Equation (\ref{vBurgers}) possesses a solution of
the form:%
\begin{equation}
v=F\left( x+Vt\right) ,\qquad V=\text{constant}
\end{equation}%
(we follow the original Bateman paper \cite{Bateman15} with slightly
different notations), if%
\begin{equation}
VF^{\prime }+FF^{\prime }=aF^{\prime \prime },
\end{equation}%
or%
\begin{equation}
\left( F+V\right) ^{2}\pm A^{2}=2aF^{\prime },
\end{equation}%
where $A$ is a positive constant. The solution is thus either%
\begin{equation}
v+V=A\tan \left[ \frac{A\left( x+Vt-c\right) }{2a}\right]
\end{equation}%
or%
\begin{equation}
\frac{A-v-V}{A+v+V}=\exp \left[ \frac{A}{a}\left( x+Vt-c\right) \right] ,
\end{equation}%
according as the $+$ or $-$ sign is taken. In the first case there is no
definite value of $v$ when $a$ tends to zero, while in the second case the
limiting value of $v$ is either $A-V$ or $A+V$ according as $x+Vt$ is less
or greater than $c.$ The limiting form of the solution is thus discontinuous 
\cite{Bateman15}.\medskip

Further examples can be found in Refs.~\cite{Craddock09}, \cite%
{Kudryash:Sine09}, \cite{Lop:Sus}, \cite{Miller77} and \cite{SuazoSusVega10}.

\section{Conclusion}

In this note, we have discussed connections of certain nonautonomous and
inhomogeneous diffusion-type equation and Burgers equation with solutions of
the Riccati and Ermakov-type systems that seem are missing in the available
literature. Traveling wave solutions of the Burgers-type equations are also
discussed.

\noindent \textbf{Acknowledgments.\/} We thank Professor Carlos Castillo-Ch%
\'{a}vez and Professor Carl Gardner for support, valuable discussions and
encouragement.


\begin{thebibliography}{99}
\bibitem{Albev:Roz10} S.~Albeverio and O.~Rozanova, \emph{Suppression of
unbounded gradients in an SDE associated with Burgers equation\/},
Trans.~Amer. Math.~Soc. \textbf{138} (2010)~\#1, 241--251.

\bibitem{Bateman15} H.~Bateman, \emph{Some recent researches on the motion
of fluids\/}, Monthly Weather Review \textbf{43} (1915)~\#4, 163--170.

\bibitem{Burgers48} J.~M.~Burgers, \emph{A mathematical model illustrating
the theory of turbulence\/}, Adv. Appl. Mech. \textbf{1} (1948), 171--199.

\bibitem{Cann} J.~R.~Cannon, \textsl{The One-Dimensional Heat Equation\/},
Encyclopedia of Mathematics and Its Applications, Vol.~32, Addison--Wesley
Publishing Company, Reading etc, 1984.

\bibitem{Cole50} J.~D.~Cole, \emph{On a quasi-linear parabolic equation
occuring in aerodynamics\/}, Quart. Appl. Math. \textbf{9} (1951)~\#3,
225--236.

\bibitem{Cor-Sot:Lop:Sua:Sus} R.~Cordero-Soto, R.~M.~Lopez, E.~Suazo and
S.~K.~Suslov, \emph{Propagator of a charged particle with a spin in uniform
magnetic and perpendicular electric fields\/}, Lett.~Math.~Phys. \textbf{84}
(2008)~\#2--3, 159--178.

\bibitem{Cor-Sot:Sua:Sus} R.~Cordero-Soto, E.~Suazo and S.~K.~Suslov, \emph{%
Models of damped oscillators in quantum mechanics\/}, Journal of Physical
Mathematics, \textbf{1} (2009), S090603 (16 pages).

\bibitem{Cor-Sot:Sua:SusInv} R.~Cordero-Soto, E.~Suazo and S.~K.~Suslov, 
\emph{Quantum integrals of motion for variable quadratic Hamiltonians\/},
Ann. Phys. \textbf{325} (2010)~\#9, 1884--1912; see also arXiv:0912.4900v9
[math-ph] 19 Mar 2010.

\bibitem{Cor-Sot:Sus} R.~Cordero-Soto and S.~K.~Suslov, \emph{Time reversal
for modified oscillators\/}, Theoretical and Mathematical Physics \textbf{162%
} (2010)~\#3, 286--316; see also arXiv:0808.3149v9 [math-ph] 8~Mar 2009.

\bibitem{Craddock09} M.~Craddock, \emph{Fundamental solutions, transition
densities and the integration of Lie symmetries\/}, J.~Diff.~Eqs. \textbf{207%
} (2009)~\#6, 2538--2560.

\bibitem{Cor-Sot:SusDPO} R.~Cordero-Soto and S.~K.~Suslov, \emph{The
degenerate parametric oscillator and Ince's equation\/}, J.~Phys. A: Math.
Theor. \textbf{44} (2011)~\#1, 015101 (9 pages); see also\emph{\ }%
arXiv:1006.3362v3 [math-ph] 2 Jul 2010.

\bibitem{Erd} A.~Erd\'{e}lyi, \textsl{Higher Transcendental Functions\/},
Vols. I--III, A.~Erd\'{e}lyi, ed., McGraw--Hill, 1953.

\bibitem{ErdInt} A.~Erd\'{e}lyi, \textsl{Tables of Integral Transforms\/},
Vols. I--II, A.~Erd\'{e}lyi, ed., McGraw--Hill, 1954.

\bibitem{Friedman64} A.~Friedman, \textsl{Partial Differential Equations of
Parabolic Type\/}, Prentice Hall, Inc., Englewood Cliffs, 1964.

\bibitem{GagWint93} L.~Gagnon and P.~Winternitz, \emph{Symmetry classes of
variable coefficient nonlinear Schr\"{o}dinger equations\/}, J. Phys.~A:
Math. Gen. \textbf{26} (1993), 7061--7076.

\bibitem{Herr-Val:Sus} M.~Herrera-Vald\'{e}z and S.~K.~Suslov, \emph{A
Graphical approach to a model of neuronal tree with variable diameter\/},
arXiv:1101.0296v1 [q-bio.NC] 31 Dec 2010.

\bibitem{Hopf50} E.~Hopf, \emph{Partial differential equation }$%
u_{t}+uu_{x}=u_{xx},$ Communs. Pure Appl. Math. \textbf{3} (1950)~\#3,
201--230.

\bibitem{JackNobleTsien83} J.~J.~B.~Jack, D.~Noble and R.~W.~Tsien, \textsl{%
Electric Current Flow in Excitable Cells\/}, Oxford, UK, 1983.

\bibitem{Kadom:Karp71} B.~B.~Kadomtsev and V.~I.~Karpman, \emph{Nonlinear
Waves\/}, Soviet Physics Uspekhi \textbf{14} (1971) \#1, 40--60.

\bibitem{Kamb10} G.~S.~Kambarbaeva, \emph{Some explicit formulas for
calculation of conditional mathematical expectations of random variables and
their applications\/}, Moscow University Math. Bull. \textbf{65} (2010)~\#5,
186--190.

\bibitem{KaratzShreve} I.~Karatzas and S.~Shreve, \textsl{Brownian Motion
and Stochastic Calculus\/}, Second Edition, Grad. Texts in Math., Vol.~113,
Springer-Verlag, 1991.

\bibitem{KudryashovBook10} N.~A.~Kudryashov, \textsl{Methods of Nonlinear
Mathematical Physics\/}, Intellect, Dolgoprudny, 2010 [in Russian].

\bibitem{Kudryash:Sine09} N.~A.~Kudryashov and D.~I.~Sinelshchikov, \emph{A
note on \textquotedblleft New abandant solutions for the Burgers
equation\textquotedblright \/}, arXiv:0912.1542v1 [nlin.SI] 8 Dec 2009.

\bibitem{Lan:Lop:Sus} N.~Lanfear, R.~M.~Lopez and S.~K.~Suslov, \emph{Exact
wave functions for generalized harmonic oscillators\/}, arXiv:11002.5119v1
[math-ph] 24 Feb 2011.

\bibitem{Lan:Sus} N.~Lanfear and S.~K.~Suslov, \emph{The time-dependent Schr%
\"{o}dinger equation, Riccati equation and Airy functions\/},
arXiv:0903.3608v5 [math-ph] 22 Apr 2009.

\bibitem{Lop:Sus} R.~M.~Lopez and S.~K.~Suslov, \emph{The Cauchy problem for
a forced harmonic oscillator\/}, Revista Mexicana de F\'{\i}sica, \textbf{55}
(2009)~\#2, 195--215; see also arXiv:0707.1902v8 [math-ph] 27 Dec 2007.

\bibitem{Me:Co:Su} M.~Meiler, R.~Cordero-Soto, and S.~K.~Suslov, \emph{%
Solution of the Cauchy problem for a time-dependent Schr\"{o}dinger
equation\/}, J. Math. Phys. \textbf{49} (2008) \#7, 072102: 1--27; see also
arXiv: 0711.0559v4 [math-ph] 5 Dec 2007.

\bibitem{Miller77} W~Miller, Jr., \textsl{Symmetry and Separation of
Variables\/}, Encyclopedia of Mathematics and Its Applications, Vol.~4,
Addison--Wesley Publishing Company, Reading etc, 1977.

\bibitem{NiPDE} A.~F.~Nikiforov, \textsl{Lectures on Equations and Methods
of Mathematical Physics\/}, Intellect, Dolgoprudnii, 2009 [in Russian].

\bibitem{Ni:Uv} A.~F.~Nikiforov and V.~B.~Uvarov, \textsl{Special Functions
of Mathematical Physics\/}, Birkh\"{a}user, Basel, Boston, 1988.

\bibitem{Ni:Su:Uv} A.~F.~Nikiforov, S.~K.~Suslov, and V.~B.~Uvarov, \textsl{%
Classical Orthogonal Polynomials of a Discrete Variable\/},
Springer--Verlag, Berlin, New York, 1991.

\bibitem{Reid72} W.~T.~Raid, \textsl{Riccati Differential Equations\/},
Academic Press, New York, 1972.

\bibitem{Risken89} H.~Risken, \textsl{The Fokker--Planck Equation. Methods
of Solution and Applications\/}, Second Edition, Springer--Verlag, New York,
1989.

\bibitem{Rosen76} S.~Rosencrans, \emph{Perturbation algebra of an elliptic
operator\/}, J.~Math. Anal. Appl. \textbf{56} (1976)~\#2, 317--329.

\bibitem{Sach87} P.~L.~Sachdev, \textsl{Nonlinear Diffusive Waves\/},
Cambridge University Press, Cambridge, 1987.

\bibitem{Suaz:Sus} E.~Suazo and S.~K.~Suslov, \emph{Cauchy problem for Schr%
\"{o}dinger equation with variable quadratic Hamiltonians\/}, under
preparation.

\bibitem{SuazoSusVega10} E.~Suazo, S.~K.~Suslov and J.~M.~Vega-Guzm\'{a}n, 
\emph{The Riccati equation and a diffusion-type equation\/}, New York J.
Math. \textbf{17a} (2011), 225--244.

\bibitem{Suslov10} S.~K.~Suslov, \emph{Dynamical invariants for variable
quadratic Hamiltonians\/}, Physica Scripta \textbf{81} (2010)~\#5, 055006
(11~pp); see also arXiv:1002.0144v6 [math-ph] 11 Mar 2010.

\bibitem{Suslov11} S.~K.~Suslov, \emph{On integrability of nonautonomous
nonlinear Schr\"{o}dinger equations\/}, arXiv:1012.3661v1 [math-ph] 16 Dec
2010.

\bibitem{Wa} G.~N.~Watson, \textsl{A Treatise on the Theory of Bessel
Functions\/}, Second Edition, Cambridge University Press, Cambridge, 1944.

\bibitem{Whitham} G.~B.~Whitham, \textsl{Linear and Nonlinear Waves\/},
Wiley, John \& Sons, New York, 1999.

\bibitem{Wh:Wa} E.~T.~Whittaker and G.~N.~Watson, \textsl{A Course of Modern
Analysis\/}, Fourth Edition, Cambridge University Press, Cambridge, 1927.

\bibitem{Yau04} S.~Yau, \emph{Computation of Fokker--Planck equation\/},
Quart. Appl. Math. \textbf{62} (2004)~\#4, 643--650.
\end{thebibliography}
\end{document}